\documentclass[letterpaper, 10 pt, conference]{ieeeconf}  
\usepackage{style}

\makeatletter
\def\ps@IEEEtitlepagestyle{
  \def\@oddhead{\hbox{}\normalfont\scriptsize\leftmark\hfil}\relax
  \def\@oddfoot{\mycopyrightnotice}
  \def\@evenfoot{}
}
\makeatother

\def\mycopyrightnotice{
  {\footnotesize
  \begin{boxedminipage}{\textwidth}
  \centering
  © 2026 IEEE. Personal use of this material is permitted. Permission from IEEE must be obtained for all other uses, in any current or future media, including reprinting/republishing this material for advertising or promotional purposes, creating new collective works, for resale or redistribution to servers or lists, or reuse of any copyrighted component of this work in other works. 
  \end{boxedminipage}
  }
}

\def\BibTeX{{\rm B\kern-.05em{\sc i\kern-.025em b}\kern-.08em
    T\kern-.1667em\lower.7ex\hbox{E}\kern-.125emX}}
\title{\LARGE \bf
Statistical Contraction for Chance-Constrained Trajectory Optimization of Non-Gaussian Stochastic Systems}

\author{Rihan Aaron D'Silva and Hiroyasu Tsukamoto
\thanks{This work benefited from technical discussions within DARPA’s Safe and Assured Foundation Robots for Open eNvironments (SAFRON) Program, under contract number HR0011-25-3-0331. The authors are with the Department of Aerospace Engineering, The Grainger College of Engineering, University of Illinois Urbana-Champaign, Urbana, Illinois  61801, {\tt\small rihanad2@illinois.edu, hiroyasu@illinois.edu}.}%
}

\begin{document}
\maketitle

\thispagestyle{IEEEtitlepagestyle}
\pagestyle{headings}


\begin{abstract}
We present a distribution-free approach to robust trajectory optimization and control of discrete-time, nonlinear, and non-Gaussian stochastic systems, with closed-loop guarantees on chance constraint satisfaction. Our framework employs conformal inference to generate coverage-based confidence sets for the closed-loop dynamics around arbitrary reference trajectories under uncertainty. It thereby constructs a joint nonconformity score to quantify both the validity of contraction (i.e., incremental stability) conditions and the impact of external stochastic disturbance on the closed-loop dynamics, without any distributional assumptions. Via appropriate constraint tightening, chance constraints can be reformulated into tractable, statistically valid deterministic constraints on the reference trajectories. This enables a formal pathway to certify the performance of learning-based motion planners and controllers, such as those with neural contraction metrics, in safety-critical real-world applications.
Notably, our statistical guarantees are non-diverging and can be computed with finite samples of the underlying uncertainty, without overly conservative structural priors. We demonstrate our approach in motion planning problems for designing safe, dynamically feasible trajectories in both numerical simulations and hardware experiments.


\end{abstract}
\section{Introduction}
\label{sec_intro}
Generating interpretable, high-performance motion plans for dynamical systems under unstructured uncertainties is central to their deployment in safety-critical applications. As the scope and scale of the tasks entrusted to these systems have expanded significantly in recent years, such planning problems have become increasingly nonlinear and stochastic, with the underlying uncertainties often non-Gaussian. Consequently, data-driven and learning-based approaches have been widely pursued as promising alternatives to classical formulations. As the gap between their empirical performance and theoretical guarantees continues to widen, we take a step back and establish a statistically rigorous, closed-loop perspective to address this challenge.


Among classical model-based approaches, chance-constrained programming~\cite{chance_constraint_original} provides a natural framework for handling system uncertainties in motion planning. It operates through relaxing hard constraints into probabilistic counterparts that hold with high confidence, yielding less conservative solutions. Since nominal chance constraints are generally nonconvex and often computationally intractable, they may be reformulated into tractable deterministic problems by introducing distributional assumptions or approximations of underlying uncertainty. For example, existing methods consider additive Gaussian noise~\cite{Covesteering, BlackmoreOnoChanceConstrained}, which allows propagation of the uncertainty through linear dynamics in closed form, leading to deterministic relaxations of chance constraints.

Alternatively, distributionally robust approaches seek constraint satisfaction across a family of probability distributions belonging to an ambiguity set. Such a set is commonly defined via moment specifications~\cite{calafiore2006distributionally} or via distributional metrics~\cite{DixitDRMPC, CoulsonLygerosDorfler2021}. However, extending these ideas beyond linear settings becomes challenging, since uncertainty or ambiguity sets cannot, in general, be propagated analytically through nonlinear dynamics. Several ideas have been proposed to address this issue, e.g., through linearization~\cite{LewCCSCP ,CutiousMPCGP}, polynomial chaos expansions~\cite{NakkaCCnonlinearS}, or learning-based methods~\cite{gp_learning,sun2021uncertaintyaware}. These methods propagate statistical information through approximated dynamics, albeit at the cost of strict probabilistic guarantees. 

\begin{figure}[t]  
    \centering
    \vspace{0.5em}
    \includegraphics[width=\linewidth]{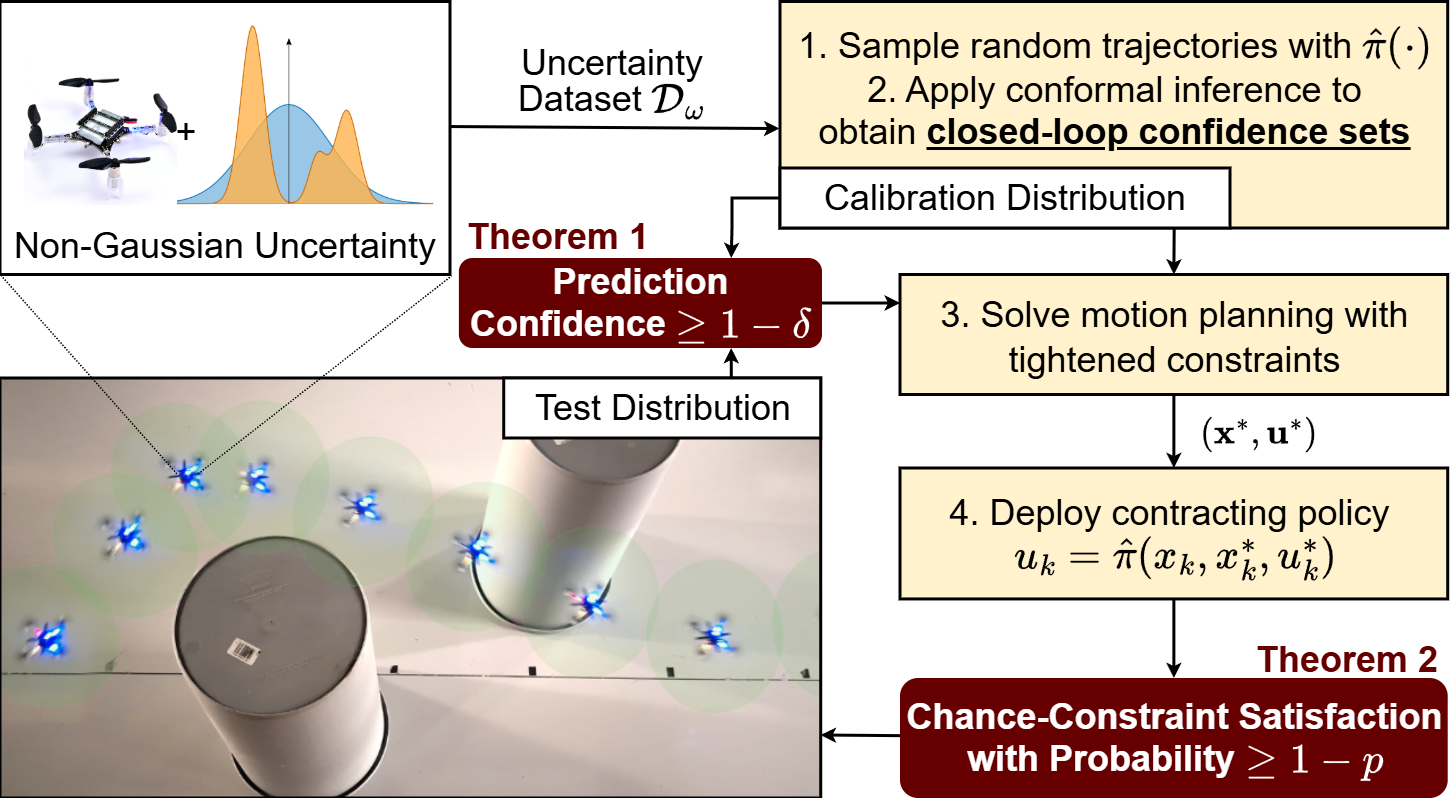}  
    \caption{Illustration of our proposed approach. Given a finite dataset \(\mathcal{D}_w\) of non-Gaussian disturbance samples, from predictions of the control contraction metrics \(\mathbf{\hat{M}}\), tracking policy \(\hat{\pi}(\cdot)\) for contraction rate \(\lambda\), we infer high probability chance constraint satisfaction guarantees for safe motion planning.
    Code : \myhref{https://github.com/Rihan24/SCC-TrajOpt}{https://github.com/Rihan24/SCC-TrajOpt}.
    } 
    

    \label{fig:intro}
    \vspace{-2em}
\end{figure}

Sampling-based approaches address this challenge from a different angle, pursuing formal a priori guarantees achievable purely through data. The scenario approach bypasses explicit distributional modeling by enforcing constraints directly on a finite set of sampled realizations or scenarios~\cite{Scenario,schildbach2014scenario}. It enables statistically valid inference of probabilistic certificates on constraint satisfaction. Sample average approximation methods aim to, instead, approximate chance constraints using the empirical distribution over the samples~\cite{Hakobyan2019SAA}.
Another line of work is conformal inference, also known as conformal prediction (CP)~\cite{conformal1, conformalbook,WCP_Barberetal}, which has recently gained attention in control theory as a powerful, yet relatively light-weight tool for distribution-free uncertainty quantification associated with point or time-series predictions. Given a predictor and a negatively oriented score function to measure the predictor's performance, it exploits the rank statistics of the score to construct a high-confidence bound on the prediction error. CP and its variants have been employed in chance-constrained optimization~\cite{CPforChanceConstrainedProgramming} and a variety of planning and control settings~\cite{lindemann2024formal}. This includes, but is not limited to, safe planning in unknown dynamic environments~\cite{confcont2,adaptive_conformal}, deterministic robustification of model-based controllers~\cite{confcont3,2023_Chee_WPC-MPC,2024_Zhou_ACP-CBF-MPC}, uncertainty-aware planning with diffusion models~\cite{2023_Sun_CP-Diffusion}, data-driven state estimation~\cite{2023_Yang_CP-Sensor}, prediction region construction for linear systems with non-Gaussian noise~\cite{CPLinearStochastic}, conformally robust Lyapunov and barrier function synthesis~\cite{TingWeiCProbust}, and deterministic contraction-based planning and control~\cite{SWei_conformalContraction}.  

However, streamlined methods that exploit CP to alleviate distribution mismatches in closed-loop trajectory planning of nonlinear, non-Gaussian stochastic systems remain underexplored. While there are a few related efforts in contraction theory, they often operate under restrictive assumptions. For example, extreme value theory-informed collision checking~\cite{EVTforContraction} provides statistical safety with a contracting controller but only under bounded perturbations. Stochastic contraction-based predictive control~\cite {KohlerMelliner} provides probabilistic reachability but requires an upper bound on the norm of the noise covariance.

This work investigates how the distribution-free concepts in CP can be integrated with the strong structural guarantees of contraction-based robust motion planning~\cite{slotine1998contraction,ccm,7989693,8814758,9303957, tsukamoto2021contraction}. We specifically consider discrete-time nonlinear stochastic systems, where the uncertainty may be non-Gaussian. We leverage a CP-based constraint tightening method to define, for each uncertainty sample, a joint nonconformity score that quantifies both the predicted distributional mismatch and the incremental stability violations of a learned contraction metric and its associated controller. We show that this approach offers a statistically rigorous characterization of how far the system's true trajectory may deviate from the reference trajectory only with finite data samples. This thereby offers a non-diverging, closed-loop guarantee of generating provably safe and dynamically feasible motion plans. Here, chance constraints are reformulated into their deterministic counterparts, agnostic to how the distribution is predicted or how the metric is constructed, with no strong structural priors. Unlike the scenario-based approach, the problem size does not grow with the number of data samples. We validate the performance of our approach through numerical simulations with the Dubins Car and hardware experiments with the Crazyflie drone, both under non-Gaussian uncertainty. The results indicate that it indeed achieves the closed-loop satisfaction of constraints at the desired probability level.

\textbf{Notation :} The set \(\Ibb_{\{a,b\}}\) denotes the collection of integers in \(\{a,b\}\) and \(\Ibb_{\geq 0}\) denotes the set of non-negative integers. We will use bold face letters \(\mathbf{x}=\{x_0, x_1, \ldots, x_N\}, \mathbf{u} = \{u_0, u_1, \ldots, u_{N-1}\} \) and \(\mathbf{w} = \{w_0, w_1, \ldots, w_{N-1}\}\) to denote the state, control and disturbance trajectories over a time horizon \(N\in\mathbb{N}\). 
\(\text{Quantile}_{1- \alpha}(\mathcal{P})\) denotes the \((1- \alpha)\)th quantile of the distribution \(\mathcal{P}\). For $A, B \in \mathbb{R}^{n \times n}$, we use $A\succeq B$ to denote positive semi-definiteness of \(A-B\). For $x \in \mathbb{R}^n$, we let $\|x\|$ denote the Euclidean norm and $\|x\|_{M} := \sqrt{x^{\mathsf{T}} Mx}$ to denote the weighted norm with respect to a positive definite matrix \(M\). For a matrix \(A\in \Rbb^{l\times n}\), \(A_i\) denotes the \(i\)th row of \(A\) and for a vector \(b\in \Rbb^n\), \(b_i\) denotes the \(i\)th element of \(b\). For two sets \(A\) and \(B\), \(A\oplus B = \{a + b \mid a\in A, b\in B\}\) denotes the Minkowski set addition over the sets. The smallest eigen value of a symmetric matrix \(M\) is denoted by \(\lambda_{\text{min}}(M)\). We use the bar accent \((\bar{\mathbf{x}}, \bar{\mathbf{u}})\) to denote nominal dynamics and hat notation \(\hat{a}\) to denote learned components. 
\label{notation}

\section{Problem Setup \& Preliminaries}

We consider a class of discrete-time nonlinear systems with additive stochastic process noise, given by,
\begin{align}\label{eq:sys_dyn}
    x_{k+1} &= f(x_k, u_k) + D(x_k) w_k
\end{align}
where  \(x_k\in \Rbb^{n_x} \), \(u_k\in \Rbb^{n_u}\) and \(w_k \in \Rbb^{n_w}\)  are the state, controls and process noise at time \(k\in\Ibb_{\geq0}\), respectively. Here \(f\) is a known smooth function and \(D:\Rbb^{n_x}\rightarrow \Rbb^{n_x \times n_w}\) is a known state-dependent matrix function.  

\begin{assumption}\label{ass:iid_noise}
    The process noise \(\mathbf{w}\) consists of i.i.d samples from an unknown zero mean probability distribution \(\mathcal{Q}\), that is, \(w_k\sim \mathcal{Q}\) and \(\Ex[w_k]=0\) for all \(k\in \Ibb_{\geq0}\). A disturbance dataset is available, given by \(\D_w = \{\mathbf{w}^{\scriptscriptstyle(0)},\mathbf{w}^{\scriptscriptstyle(1)}, \ldots , \mathbf{w}^{\scriptscriptstyle(K-1)}\}\) for \(K\in \Nbb\), where each \(\mathbf{w}^{\scriptscriptstyle(j)} \) represents samples of process noise over the time horizon \(N\), with \(w^{\scriptscriptstyle(j)}_k \sim \mathcal{Q} \) for all \(k\in \Ibb_{[0,N-1]}\). The support of distribution \(\mathcal{Q}\) can be unbounded.
\end{assumption}

Given an initial condition \(x_0 = \mathbf{x}(0)\), we aim to drive the system~\eqref{eq:sys_dyn} to a target region \(\mathcal{X}_N\subset \Rbb^{n_x}\) over a time horizon \(N\in \mathbb{N}\) , while satisfying state constraints \(\mathcal{X}\subset \Rbb^{n_x}\) and minimizing given step cost \(c(\cdot)\) and final cost \(c_F(\cdot)\). This problem can be formulated as a chance-constrained optimal control problem, given by, 

\begin{problem}
\label{prob:problem1}
\begin{subequations}
\begin{align}
\min_{\mathbf{x},\mathbf{u}}\;&  \;\;
   \Ex\!\left[ c_F(x_N) + \sum_{k=0}^{N-1} c(x_k, u_k) \right] \label{eq: SOCP_Cost} \\
\text{s.t.}\;& x_{k+1} = f(x_k, u_k) + D(x_k)w_k && k\in \Ibb_{[0,N-1]} \nonumber \\
& \Pr(x_k \in \X) \ge 1-p &&  k\in \Ibb_{[1,N-1]} \label{eq:SOCP_stateconstraints}\\
& \Pr(x_N \in \X_N) \ge 1-p \label{eq:SOCP_goalconstraint}\\
& x_0 = \mathbf{x}(0) \nonumber 
\end{align}
\end{subequations}
\end{problem}
where \(p\in(0,1)\) denotes the maximum allowed probability of constraint violation. Problem~\ref{prob:problem1} is generally intractable due to nonconvex distributional constraints on the uncertain states. A tractable deterministic reformulation of the problem becomes challenging since the distribution \(\mathcal{Q}\) is not explicitly known, on top of the difficulties in propagating this uncertainty through the nonlinear dynamics. To tackle this, we will first restrict the controls to time-parametrized state feedback policy \(u_k=\pi_k(x_k)\) and then consider chance constraint satisfaction for the closed-loop dynamics, 
\begin{align}\label{eq:CL_dyn}
    x_{k+1} = \varphi_\pi(x_k,k) + D(x_k)w_k
\end{align}
where \(\varphi_\pi(x, k) \coloneqq f(x, \pi_k(x))\) denotes the vector field of the nominal closed-loop dynamics. 

Our approach makes use of the disturbance dataset \(\D_w\) and is based on contraction theory and finite sample guarantees associated with statistical inference. It designs a suitable policy \(\pi_k(\cdot)\) which is a feasible solution to Problem~\ref{prob:problem1}; that is, the closed-loop state \(x_k\) in~\eqref{eq:CL_dyn} satisfies constraints \(x_k \in \mathcal{X}\) and \(x_N \in \mathcal{X}_N\) with a confidence of at least \((1-p)\). We now briefly introduce the necessary preliminaries. 


\textbf{Closed-loop Contraction:} Contraction theory offers a set of tools to analyze and control the incremental stability of nonlinear system trajectories with respect to each other~\cite{slotine1998contraction}.

Consider the nominal closed-loop dynamics, $x_{k+1} = \varphi_\pi (x_k, k).$ Let \(F_k \coloneqq \tfrac{\partial \varphi_\pi(x,k)}{\partial x}\) denote the Jacobian matrix of \(\varphi_\pi\), and \(\Gamma(a_k,b_k)\) denote the set of piecewise-smooth curves \(c:s\in[0,1]\rightarrow \Rbb^{n_x}\) connecting \(a_k\) to \(b_k\) for some \(a_k, b_k \in \Rbb^{n_x}\), with \(c(0)=a_k\) and \(c(1)=b_k\). For a square matrix function \(\Theta(x,k) \in \mathbb{R}^{n\times n}\) and uniformly positive definite matrix function  \(M(x,k) \coloneqq \Theta(x,k)^\top \Theta(x,k) \in \mathbb{R}^{n\times n}\), the Riemann energy for defining the distance between \(a_{k}\) and \(b_{k}\) is given by~\cite{ccm}, 
\begin{align}
\label{eq:REnergy}
    E(a_{k}, b_{k}) \coloneqq \underset{c\in\Gamma(a_{k}, b_{k}) }{\min} \int_0^1 \frac{\partial c(s)}{\partial s}^\top  M(c(s), k)\frac{\partial c(s)}{\partial s} ds.
\end{align} 
Lemma~\ref{lemma:DTContraction} establishes conditions for uniform incremental stability for the nominal closed-loop system~\cite{slotine1998contraction,wei2021discretetime}.
 
\begin{lemma}\label{lemma:DTContraction}
    Consider a uniformly bounded matrix function \(M_k \coloneqq M(x,k)\), i.e., \(\underline{m}I \preceq M_k \preceq \overline{m}I\) for some \(\overline{m} \geq \underline{m} >0\). If \(F_k^\top M_{k+1} F_k \preceq \lambda M_k \) with \(\lambda\in [0,1)\) holds for all \(x\in\Rbb^{n_x}\) and \(k\in \Ibb_{\geq0}\), then the energy \(E(x_{1,k}, x_{2,k})\) decreases exponentially over the flow of the nominal closed-loop system and satisfies the decay condition \(E(\varphi_\pi(x_{1,k},k), \varphi_\pi(x_{2,k},k)) \leq \lambda E(x_{1,k},x_{2,k})\), where $(x_{1,k}, x_{2,k})$ denotes any pair of the solution trajectories. The nominal closed-loop system is then said to be globally contracting with rate \(\lambda\).
\end{lemma}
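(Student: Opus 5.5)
The plan is to follow the standard discrete-time contraction argument, pushing a minimizing geodesic forward through the nominal closed-loop map. Fix $k$ and a pair $x_{1,k},x_{2,k}$. Let $c\in\Gamma(x_{1,k},x_{2,k})$ attain the minimum in~\eqref{eq:REnergy}, so that $E(x_{1,k},x_{2,k})=\int_0^1 c_s^\top M(c(s),k)c_s\,ds$ with $c_s:=\partial c/\partial s$. If the minimum is not attained, take an $\epsilon$-optimal curve instead and let $\epsilon\to0$ at the end. Existence of a minimizer, or at least of a near-minimizing sequence, is where the uniform bounds $\underline{m}I\preceq M_k\preceq\overline{m}I$ enter: they give $\underline{m}\|b-a\|^2\le E(a,b)\le\overline{m}\|b-a\|^2$, so the infimum is finite and positive for $a\neq b$, and the metric is complete.

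Next, I would define the image curve $\tilde c(s):=\varphi_\pi(c(s),k)$. Because $\varphi_\pi$ is smooth and $c$ is piecewise smooth, $\tilde c$ is a piecewise-smooth curve in $\Gamma(\varphi_\pi(x_{1,k},k),\varphi_\pi(x_{2,k},k))$. Its tangent is $\tilde c_s=F_k(c(s))\,c_s$ by the chain rule. Since $\tilde c$ is admissible but not necessarily optimal for the energy at time $k+1$, the definition of $E$ with the time-$(k+1)$ metric gives
\begin{align*}
E(\varphi_\pi(x_{1,k},k),\varphi_\pi(x_{2,k},k)) &\le \int_0^1 c_s^\top F_k^\top M(\tilde c(s),k+1)F_k\, c_s\,ds \\
&\le \lambda\int_0^1 c_s^\top M(c(s),k)c_s\,ds=\lambda E(x_{1,k},x_{2,k}).
\end{align*}
The second inequality applies the hypothesis $F_k^\top M_{k+1}F_k\preceq\lambda M_k$ pointwise along the curve, with $M_{k+1}$ evaluated at $\varphi_\pi(x,k)$ and $M_k$ at $x$, which is exactly how that condition should be read.

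Finally, I would iterate the one-step bound along solution trajectories to get $E(x_{1,k},x_{2,k})\le\lambda^k E(x_{1,0},x_{2,0})$, which is the claimed exponential decay. Combining this with the sandwich bounds gives $\|x_{1,k}-x_{2,k}\|\le\sqrt{\overline{m}/\underline{m}}\,\lambda^{k/2}\|x_{1,0}-x_{2,0}\|$, i.e.\ global contraction with rate $\lambda$.

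I expect the main obstacle to be interpretive rather than computational. The hypothesis must be read with $M_{k+1}$ taken at the image point $\varphi_\pi(x,k)$; with that reading fixed, the pullback argument above is routine. The only remaining technical care is the attainment of the minimum, which I would sidestep with the $\epsilon$-optimal curve argument described in the first step.
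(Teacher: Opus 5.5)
Your argument is correct and is essentially the standard proof. The paper gives no proof of its own and defers to the cited discrete-time contraction results, which push a (near-)minimizing curve forward through $\varphi_\pi(\cdot,k)$, read $M_{k+1}$ at the image point, and take the infimum, exactly as you do. One small correction: the uniform bounds $\underline{m}I \preceq M_k \preceq \overline{m}I$ play no role in the one-step decay, since an infimum always admits $\epsilon$-optimal curves and no existence argument is needed; they enter only when you convert $E(x_{1,k},x_{2,k}) \le \lambda^k E(x_{1,0},x_{2,0})$ into the Euclidean bound.
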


\textbf{Challenges in Metric Synthesis:} Control Contraction Metric (CCM)~\cite{ccm} establishes conditions for the existence of a feedback tracking controller such that the closed-loop system is contracting with a given rate. The search for a suitable CCM and corresponding differential feedback gain requires solving infinite-dimensional linear matrix inequalities, which are often solved using finite-dimensional approximations with restrictive assumptions, e.g., sum-of-squares relaxation~\cite{ccm, SinghRobustFeedbackPlanning}. A tracking controller is then constructed by computing the geodesics with respect to the calculated CCM. Existing methods propose neural network parameterizations of a contraction metric and its tracking controller~\cite{tsukamoto2020neural,SunJhaFanLearningCertifiedCCM}. However, the quality of such learned components depends on their structural priors and training, and thus remains primarily empirical. Verifying the validity, in terms of satisfying conditions for incremental stability, becomes crucial for their trustworthy usage.

\textbf{Conformal Prediction (CP):} CP is a statistical tool for uncertainty quantification that leverages rank statistics of \emph{exchangeable} random variables. It uses nonconformity scores to construct prediction intervals that are valid up to the user-defined probability level while being distribution agnostic~\cite{shafer2008tutorial, lindemann2024formal}. CP is also model-agnostic in nature and holds out when the nonconformity scores result from a function composition, e.g., a neural network. Let \(S^{\scriptscriptstyle(1)}, \dots, S^{\scriptscriptstyle(K)}, S^{\scriptscriptstyle(K+1)}\) be a collection of \((K+1)\) nonconformity scores, where \( \mathcal{S} = \{S^{\scriptscriptstyle(1)}, \dots, S^{\scriptscriptstyle(K)}\}\) are scores drawn from the calibration distribution and \(S^{\scriptscriptstyle(K+1)}\) corresponds to the test sample. In robotic applications, the assumption of exchangeability can be too restrictive for inference on the test sample, due to possible distribution shifts with respect to the calibration dataset. This paper thus employs a CP variant called Weighted Conformal Prediction (W-CP)~\cite{WCP_Barberetal}, which relaxes exchangeability by reweighting calibration samples to better match the test distribution. Let \(\{\tilde{w}_j\}_{j=1}^K\) be a set of positive weights with \(\tilde{w}_j \in [0,1]\) for all \(j\in \Ibb_{[1,K]}\). Define the weighted quantile as, 
\begin{align}
    q_{1-\alpha}(\mathcal{S}) \coloneqq \text{Quantile}_{1-\alpha}\left(\sum_{i=1}^{K} \bar{w}_i\delta_{S^{\scriptscriptstyle (i)}} + \bar{w}_\infty\delta_\infty\right) \nonumber    
\end{align}
where \(\bar{w}_j = \tilde{w}_j /(\sum_{i=1}^{K} \tilde{w}_i + 1)\),  \(\bar{w}_\infty = 1 / (\sum_{i=1}^{K} \tilde{w}_i + 1)\) and \(\delta_a\) denotes the point mass probability at \(a\in \mathbb{R}\cup \{-\infty, +\infty\}\). 

\begin{lemma} \label{lemma:CPlemma}
    \cite{WCP_Barberetal} Let \(\mathsf{d}_{\text{TV}} (\mathcal{P}_1, \mathcal{P}_2)\) be the total variation distance between distributions  \(\mathcal{P}_1\) and \(\mathcal{P}_2\). For any \(\alpha\in (0,1)\), 
    \begin{align}\label{eq:WCP_guarantee}
        \Pr \left(S^{\scriptscriptstyle(K+1)} \leq q_{1-\alpha}(\mathcal{S}) \right) \geq 1-\bar{\alpha}
    \end{align}
    where \(\bar{\alpha} = \alpha + \sum_{i=1}^{K} \bar{w}_i \mathsf{d}_{\text{TV}} (\bar{\mathcal{S}}, \bar{\mathcal{S}}_i)\). Here \( \bar{\mathcal{S}} = \mathcal{S}\cup S^{\scriptscriptstyle (K+1)}\) and \(\bar{\mathcal{S}}_i\) is the set \(\bar{\mathcal{S}}\) with scores \(S^{\scriptscriptstyle(i)}\) and \(S^{\scriptscriptstyle(K+1)}\) swapped for each $i = 1,\cdots,K$, and \(\Pr\) denotes the marginal probability measure over randomness in draws of the scores in \(\mathcal{S}\). 
\end{lemma}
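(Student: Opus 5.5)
The plan is to reproduce the coupling argument of Barber et al. for non-exchangeable conformal prediction, specialized to the weighted quantile \(q_{1-\alpha}(\mathcal{S})\) defined above. The first step is an empirical, deterministic version of the claim. Define the augmented score vector \(\bar{\mathcal{S}} = (S^{\scriptscriptstyle(1)},\dots,S^{\scriptscriptstyle(K+1)})\) and the full weighted quantile
\[
Q^\ast(\bar{\mathcal{S}}) := \text{Quantile}_{1-\alpha}\Bigl(\textstyle\sum_{i=1}^{K} \bar{w}_i\delta_{S^{\scriptscriptstyle(i)}} + \bar{w}_\infty\delta_{S^{\scriptscriptstyle(K+1)}}\Bigr).
\]
Replacing the test score by \(+\infty\) can only increase the quantile. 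It also leaves it unchanged whenever \(S^{\scriptscriptstyle(K+1)}\) lies below it. Hence \(S^{\scriptscriptstyle(K+1)} > q_{1-\alpha}(\mathcal{S})\) if and only if \(S^{\scriptscriptstyle(K+1)} > Q^\ast(\bar{\mathcal{S}})\). The miscoverage event is therefore expressed through the symmetric-looking object \(Q^\ast\) of the full vector.

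Next, define the set of ``strange'' points \(\mathcal{E}(\bar{\mathcal{S}}) := \{ i \in \Ibb_{[1,K+1]} : S^{\scriptscriptstyle(i)} > Q^\ast(\bar{\mathcal{S}})\}\), using the weights \(\bar{w}_i\) for \(i\le K\) and \(\bar{w}_{K+1}:=\bar{w}_\infty\). By the definition of the \((1-\alpha)\) weighted quantile, the total weight of points strictly above the quantile is at most \(\alpha\), so \(\sum_{i\in\mathcal{E}} \bar{w}_i \le \alpha\) deterministically. I then use the swap identity. If \(\bar{\mathcal{S}}_i\) denotes \(\bar{\mathcal{S}}\) with entries \(i\) and \(K+1\) exchanged, then \(K+1 \in \mathcal{E}(\bar{\mathcal{S}}_i)\) if and only if \(i \in \mathcal{E}(\bar{\mathcal{S}})\), with weights kept attached to positions. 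I must check this carefully, since the weights are not symmetric under the swap. In Barber et al. this is handled by placing weight \(\bar{w}_\infty\) on the last position, and the argument goes through with the quantile defined position-wise. I would follow their formulation exactly, defining \(\mathcal{E}\) via the same weighted quantile after swapping.

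Finally, I write the miscoverage probability as an average over positions:
\[
\Pr(K+1\in \mathcal{E}(\bar{\mathcal{S}})) = \sum_{i=1}^{K+1}\bar{w}_i \Pr(K+1\in\mathcal{E}(\bar{\mathcal{S}})),
\]
using \(\sum_{i=1}^{K+1} \bar{w}_i = 1\). For each \(i\le K\), the event \(K+1\in\mathcal{E}(\cdot)\) is a fixed measurable set. This gives
\[
\Pr(K+1\in\mathcal{E}(\bar{\mathcal{S}})) \le \Pr(K+1\in\mathcal{E}(\bar{\mathcal{S}}_i)) + \mathsf{d}_{\text{TV}}(\bar{\mathcal{S}},\bar{\mathcal{S}}_i) = \Pr(i\in\mathcal{E}(\bar{\mathcal{S}})) + \mathsf{d}_{\text{TV}}(\bar{\mathcal{S}},\bar{\mathcal{S}}_i).
\]
Summing with weights yields \(\mathbb{E}\bigl[\sum_i \bar{w}_i \mathbf{1}\{i\in\mathcal{E}\}\bigr] + \sum_{i=1}^K \bar{w}_i\mathsf{d}_{\text{TV}}(\bar{\mathcal{S}},\bar{\mathcal{S}}_i) \le \alpha + \sum_{i=1}^K \bar{w}_i\mathsf{d}_{\text{TV}}(\bar{\mathcal{S}},\bar{\mathcal{S}}_i)\). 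Taking complements gives the coverage bound in the statement.

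The main obstacle is the swap identity in the second step with asymmetric weights. The weighted quantile after swapping uses the same position weights, so \(Q^\ast(\bar{\mathcal{S}}_i)\neq Q^\ast(\bar{\mathcal{S}})\) in general. The equivalence \(K+1\in\mathcal{E}(\bar{\mathcal{S}}_i)\iff i\in\mathcal{E}(\bar{\mathcal{S}})\) must therefore be replaced by Barber et al.'s device. They define the strange set through the unswapped quantile and use the one-sided inclusion that suffices for an upper bound. If that fails, I would simply invoke their Theorem 2, since the statement is cited verbatim.
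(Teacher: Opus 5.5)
The paper gives no proof of this lemma; it simply cites Barber et al. Your fallback is therefore exactly the paper's treatment, and your skeleton (test score to $+\infty$, strange set, swap, weighted averaging) is their argument. As a self-contained proof, however, it has a gap at the only step with real content.

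Your first step is right up to a slip. Replacing $S^{\scriptscriptstyle(K+1)}$ by $+\infty$ leaves the quantile unchanged when $S^{\scriptscriptstyle(K+1)}$ lies strictly \emph{above} $Q^\ast(\bar{\mathcal{S}})$, not below; that is the case the equivalence needs. The deterministic bound $\sum_{i\in\mathcal{E}}\bar{w}_i\le\alpha$ is also fine.

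The problem is the equality $\Pr(K+1\in\mathcal{E}(\bar{\mathcal{S}}_i))=\Pr(i\in\mathcal{E}(\bar{\mathcal{S}}))$ in your last display. It is false in general: with $K=1$, $\tilde{w}_1=1/2$, $\alpha=1/2$, whenever $S^{\scriptscriptstyle(1)}>S^{\scriptscriptstyle(2)}$ one has $1\in\mathcal{E}(\bar{\mathcal{S}})$ but $2\notin\mathcal{E}(\bar{\mathcal{S}}_1)$. You never state or prove what replaces it. Moreover, your argument never uses the hypothesis $\tilde{w}_j\in[0,1]$, yet the lemma is false without it. For $K=1$, $\tilde{w}_1=100$, $\alpha=0.05$ and i.i.d.\ continuous scores, the TV term vanishes and $q_{0.95}(\mathcal{S})=S^{\scriptscriptstyle(1)}$, so the coverage is $1/2$. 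The proof cannot close until that hypothesis enters somewhere.

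The missing idea is the one-sided inclusion $K+1\in\mathcal{E}(\bar{\mathcal{S}}_i)\Rightarrow i\in\mathcal{E}(\bar{\mathcal{S}})$. It holds precisely because $\bar{w}_i\le\bar{w}_\infty$, i.e.\ $\tilde{w}_i\le 1$. Write $a=S^{\scriptscriptstyle(i)}$ and $b=S^{\scriptscriptstyle(K+1)}$. The original weighted measure puts $\bar{w}_i$ on $a$ and $\bar{w}_\infty$ on $b$, the swapped one puts $\bar{w}_\infty$ on $a$ and $\bar{w}_i$ on $b$, and all other atoms agree.
\begin{itemize}
\item If $a\le b$, the two CDFs coincide on $(-\infty,a)$. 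So $Q^\ast(\bar{\mathcal{S}}_i)<a$ forces $Q^\ast(\bar{\mathcal{S}})=Q^\ast(\bar{\mathcal{S}}_i)<a$.
\item If $a>b$, the swapped measure arises from the original by moving mass $\bar{w}_\infty-\bar{w}_i\ge 0$ from $b$ up to $a$. It therefore stochastically dominates, so $Q^\ast(\bar{\mathcal{S}}_i)\ge Q^\ast(\bar{\mathcal{S}})$, and $a>Q^\ast(\bar{\mathcal{S}}_i)$ gives $a>Q^\ast(\bar{\mathcal{S}})$.
\end{itemize}
Replace your ``$=$'' by ``$\le$'' via this inclusion; the $i=K+1$ term is trivial. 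Your weighted averaging over $i=1,\dots,K+1$ then closes the proof exactly as you wrote it.
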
 
\begin{remark}\label{remark:CPremark}
    The vanilla CP guarantee from~\cite{shafer2008tutorial} is obtained for exchangeable scores by setting \(\tilde{w}_1 =\cdots= \tilde{w}_K = 1\). Assuming \(\bar{w}_1S^{\scriptscriptstyle(1)}\leq \cdots \leq \bar{w}_K S^{\scriptscriptstyle(K)}\), if \(K \geq \lceil (1-\alpha)(K+1)\rceil \), one can pick \(q_{1-\alpha}(\mathcal{S}) = \bar{w}_{j_\alpha} S^{\scriptscriptstyle(j_\alpha)}\) where \(j_{\alpha} = \lceil (1-\alpha)(K+1)\rceil\). The weights are chosen so as to minimize the coverage gap \(\sum_{i=1}^{K} \bar{w}_i \mathsf{d}_{\text{TV}} (\mathcal{S}, \mathcal{S}_i)\) with higher weights assigned to data points that resemble the test distribution more closely~\cite{WCP_Barberetal, confcont3}.
\end{remark}



\section{Uncertainty Quantification \& Contraction}
\label{sec:UQ_sec}
We aim to construct the following confidence sets for the closed-loop system~\eqref{eq:CL_dyn} under a suitable policy \(\pi(\cdot)\), with finite samples of the noise distribution \(\D_w\).
\begin{definition}\label{def:confidence_set_def}
    Let \(x\in\Rbb^{n_x}\) be a random vector and \(M\) be a positive definite matrix. For \(\mu\in \Rbb^{n_x}\), \(\mathcal{B}^\theta (\mu, M) \coloneqq \left\{ \tilde{x} \mid \|\tilde{x}- \mu\|_{M^{-1}} \leq 1 \right\}\) is an ellipsoidal confidence set of probability level \(\theta\in(0,1)\) for \(x\) if \(\Pr\left(x\in \mathcal{B}^\theta(\mu,M)\right) \geq \theta\). 
\end{definition}

We call the state and control trajectory pair \((\mathbf{\bar{x}}, \mathbf{\bar{u}})\) as a target trajectory if they are forward complete solutions of the nominal dynamics \(\bar{x}_{k+1} = f(\bar{x}_k, \bar{u}_k)\) in~\eqref{eq:sys_dyn}. Let us consider neural network predictions for a state-independent CCM \(\hat{M}_k\) and the tracking policy \(\hat{\pi}(\cdot,\bar{x}, \bar{u})\), trained to make the nominal closed-loop system controlled by \(u_k = \hat{\pi}(x_k, \bar{x}_k, \bar{u}_k)\) contracting with rate \(\lambda\in [0,1)\) under the metric defined through \(\hat{M}_k\).  We will consider the target tracking feedback policy \(u_k = \pi_k(x_k) = \hat{\pi}(x_k, \bar{x}_k, \bar{u}_k)\), thereby reformulating Problem 1 into a deterministic search over the space of target trajectories \((\mathbf{\bar{x}}, \mathbf{\bar{u}})\). Here, we aim to quantify the epistemic uncertainty associated with these learned components and their contribution to the uncertainty of the closed-loop dynamics, along with the effect of stochastic noise during deployment. We will perform a calibration step to evaluate the controller \(\hat{\pi}\) over random tracking scenarios in the presence of noise characterized by the disturbance dataset \(\D_w\). This will be used to show that for a suitable choice of nonconformity score, we can obtain meaningful predictive inference on the confidence set of the closed-loop state in~\eqref{prob:problem1} over the time horizon via Lemma~\ref{lemma:CPlemma}. First, we make the following assumption, 
\begin{assumption}\label{ass:feasible_policy}
    \(\hat{\pi}(x,x,u) = u \;\; \forall \;\; x\in \Rbb^{n_x}\)  and \(u\in \Rbb^{n_u}\).
\end{assumption}

\noindent This assumption ensures that the target trajectory \((\mathbf{\bar{x}}, \mathbf{\bar{u}})\) is a valid solution to the nominal closed loop system. Let \(\hat{M}_k = \Theta_k^\top \Theta_k\) and define, 
\begin{multline}
\Delta_V(x_k,\bar x_k,\bar u_k)
\coloneqq \max\Big\{0,\;
     -\lambda \Big\| \hat{\Theta}_k(x_k-\bar{x}_k)\Big\|  \\
    + \Big\| \hat{\Theta}_{k+1}
       \big(\varphi_{\hat{\pi}}(x_k)-\varphi_{\hat{\pi}}(\bar{x}_k)\big)\Big\| 
\Big\}
\label{eq:LME}
\end{multline}
where \(\varphi_{\hat{\pi}}(x_k) \coloneqq f(x_k, \hat{\pi}(x_k, \bar{x}_k, \bar{u}_k))\). Note that under a state-independent and time indexed metric defined through \(\hat{M}_k\), the geodesic is equivalent to a straight line connecting the two points and the Riemann energy \( E(x_k, \bar{x}_k)\) in~\eqref{eq:REnergy} reduces to to the weighted norm \(\|x_k-\bar{x}_k\|_{\hat{M}_k}\). Hence, given \((\bar{x}_k,\bar{u}_k)\), equation~\eqref{eq:LME} denotes the residuals of the decay condition in~\eqref{lemma:DTContraction} at time \(k\). Let \(\Omega\) denote the distribution of \(N\) length state and control target trajectories with initial state \(\bar{x}_0 = \mathbf{x}(0)\). By taking a random sample \((\mathbf{\bar{x}}^{\scriptscriptstyle(j)},\mathbf{\bar{u}}^{\scriptscriptstyle(j)}) \sim \Omega\) for every \(j\)-th sample \(\mathbf{w}^{\scriptscriptstyle(j)}\) from dataset \(\mathcal{D}_w\), one can generate \(K\) distinct closed-loop realizations \(\mathbf{x}^{\scriptscriptstyle(j)}\) of the system~\eqref{eq:sys_dyn} under the feedback policy \(u_k = \hat{\pi}(x_k, \bar{x}^{\scriptscriptstyle(j)}_k, \bar{u}^{\scriptscriptstyle(j)}_k)\), i.e., 
\begin{align}
    x^{\scriptscriptstyle(j)}_{k+1} = f\left(x^{\scriptscriptstyle(j)}_k, \hat{\pi}\left(x^{\scriptscriptstyle(j)}_k, \bar{x}^{\scriptscriptstyle(j)}_k, \bar{u}^{\scriptscriptstyle(j)}_k\right)\right) + D(x^j_k) w^{\scriptscriptstyle(j)}_k
\end{align}
with initial state \(x^{\scriptscriptstyle(j)}_0 = \mathbf{x}(0)\) for all \(j\in\Ibb_{[1,K]} \).
Now for each \(j\in \Ibb_{[1,K]}\) and \(k\in\Ibb_{[1,N]}\), define the nonconformity score as follows, 
\begin{align}\label{eq:score_def}
    S^{\scriptscriptstyle(j)}_k \coloneqq \sum_{i=0}^{k-1} \lambda^i \left( \Delta_V (x^{\scriptscriptstyle(j)}_i, \bar{x}^{\scriptscriptstyle(j)}_i, \bar{u}^{\scriptscriptstyle(j)}_i )+  \left\| \hat{\Theta}_{i+1} D(x^{\scriptscriptstyle(j)}_i) w^{\scriptscriptstyle(j)}_i\right\| \right)  ~~~~~~~~~~
\end{align}
 and collect the scores \(\mathcal{S}_k \coloneqq \{S^{\scriptscriptstyle(j)}_k\}_{j=1}^K\). This will be our calibration dataset at time \(k\). Note that the nonconformity score defined in equation~\eqref{eq:score_def} maps a target trajectory \((\mathbf{\bar{x}}^{\scriptscriptstyle(j)},\mathbf{\bar{u}}^{\scriptscriptstyle(j)})\) and noise realization \(\mathbf{w}^{\scriptscriptstyle(j)}\) to a positive real number. Hence, the calibration distribution can be identified with \((\Omega, \mathcal{Q}^N)\). Let \(u_k = \hat{\pi}(x_k, \bar{x}^{\scriptscriptstyle(K+1)}_k, \bar{u}^{\scriptscriptstyle(K+1)}_k)\) denote the control policy during deployment. The corresponding target trajectory \((\mathbf{\bar{x}}^{\scriptscriptstyle(K+1)},\mathbf{\bar{u}}^{\scriptscriptstyle(K+1)}) \sim \Xi\) is obtained by solving Problem 1, as to be described in Section~\ref{sec:ConstraintTightening}, and not known apriori. Here, \(\Xi\) denotes the distribution of target trajectories that are feasible solutions to the proposed optimization problem. We propose the following result for the test score $S^{\scriptscriptstyle(K+1)}_k$ obtained from the test distribution \((\Xi, \mathcal{Q}^N)\).
 

\begin{theorem}\label{thrm:confset_thrm}
    Suppose that the Assumptions~\ref{ass:iid_noise} and~\ref{ass:feasible_policy} hold. Consider a sample target trajectory \( (\bar{\mathbf{x}},\bar{\mathbf{u}}) \sim \Xi\). Let \(\bar{w}_j \) denote the normalized weights for \(j\in\Ibb_{[1,K]}\). Suppose also that the coverage gap \(\sum_{i=1}^{K+1} \bar{w}_i\mathsf{d}_{\text{TV}} (\bar{\mathcal{S}}_{k}, \bar{\mathcal{S}}_{k,i})\) is upper bounded by \(\bar{\delta}\), where \(0\leq\bar{\delta}<\delta\) for some \(\delta\in(0,1)\) and for all \(k\in \Ibb_{[1,N]}\). Then \(\mathcal{B}^{1-\delta}(\bar{x}_k, W_k)\) is a confidence set with probability level \((1-\delta)\) for the state \(x_k\) in~\eqref{eq:sys_dyn} at time \(k\in \mathbb{I}_{[1,N]}\) under the feedback policy  \(u_k = \hat{\pi}(x_k, \bar{x}_k,\bar{u}_k)\), where \(W_k \coloneq C^2_k\hat{M}^{-1}_k\) and \(C_k \coloneqq q_{1-\delta + \bar{\delta}}(\mathcal{S}_k)\). 
\end{theorem}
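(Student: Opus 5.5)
The plan has two parts: a deterministic bound on the tracking error by the score, and a probabilistic bound on the score from Lemma~\ref{lemma:CPlemma}.

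\emph{Deterministic part.} Fix the test target trajectory $(\bar{\mathbf{x}},\bar{\mathbf{u}})$ and a noise realization $\mathbf{w}$, and let $e_k \coloneqq \|\hat{\Theta}_k(x_k-\bar{x}_k)\|$. Assumption~\ref{ass:feasible_policy} gives $\hat{\pi}(\bar{x}_k,\bar{x}_k,\bar{u}_k)=\bar{u}_k$, so $\bar{x}_{k+1}=\varphi_{\hat{\pi}}(\bar{x}_k)$. Hence
\[
x_{k+1}-\bar{x}_{k+1} = \varphi_{\hat{\pi}}(x_k)-\varphi_{\hat{\pi}}(\bar{x}_k) + D(x_k)w_k .
\]
Multiply by $\hat{\Theta}_{k+1}$ and apply the triangle inequality. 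By definition~\eqref{eq:LME}, the first term satisfies $\|\hat{\Theta}_{k+1}(\varphi_{\hat{\pi}}(x_k)-\varphi_{\hat{\pi}}(\bar{x}_k))\| \le \lambda e_k + \Delta_V(x_k,\bar{x}_k,\bar{u}_k)$. This gives the scalar recursion
\[
e_{k+1}\le \lambda e_k + \Delta_V(x_k,\bar{x}_k,\bar{u}_k) + \|\hat{\Theta}_{k+1}D(x_k)w_k\|.
\]
Since $x_0=\bar{x}_0=\mathbf{x}(0)$, we have $e_0=0$. Unrolling the recursion yields $e_k\le\sum_{i=0}^{k-1}\lambda^{k-1-i}(\cdots)_i$. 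One point needs checking: the score~\eqref{eq:score_def} is written with weights $\lambda^i$ rather than $\lambda^{k-1-i}$. I will use the discounting convention that makes the unrolled sum match the score as defined, so that $e_k\le S_k^{(K+1)}$. If the intended convention is literally $\lambda^i$, I would instead bound using whichever ordering the score encodes, or define the test score with the same indexing as the unrolled recursion; all that is required is that the test score and the calibration scores use an identical functional form. Note that the contraction residual $\Delta_V$ absorbs any failure of the learned metric, so no contraction assumption is needed here.

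\emph{Set inclusion.} Suppose $e_k\le C_k$. Then $(x_k-\bar{x}_k)^\top \hat{M}_k (x_k-\bar{x}_k)\le C_k^2$, which says exactly that $\|x_k-\bar{x}_k\|_{W_k^{-1}}\le 1$ with $W_k=C_k^2\hat{M}_k^{-1}$. So $x_k\in\mathcal{B}^{1-\delta}(\bar{x}_k,W_k)$, and therefore
\[
\Pr\bigl(x_k\in\mathcal{B}^{1-\delta}(\bar{x}_k,W_k)\bigr)\ge \Pr\bigl(S_k^{(K+1)}\le C_k\bigr).
\]

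\emph{Probabilistic part.} Apply Lemma~\ref{lemma:CPlemma} at each $k$ with $\alpha=\delta-\bar{\delta}\in(0,1)$. This is admissible because $0\le\bar{\delta}<\delta<1$. The lemma gives coverage at least $1-\alpha-(\text{coverage gap})\ge 1-(\delta-\bar{\delta})-\bar{\delta}=1-\delta$, using the assumed bound $\bar{\delta}$ on the gap for every $k\in\Ibb_{[1,N]}$. Here $C_k=q_{1-\delta+\bar{\delta}}(\mathcal{S}_k)$ matches $q_{1-\alpha}$ exactly.

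\emph{Main obstacle.} The main obstacle is justifying that Lemma~\ref{lemma:CPlemma} applies to the test score. That score comes from $(\Xi,\mathcal{Q}^N)$, whereas the calibration scores come from $(\Omega,\mathcal{Q}^N)$, and $\Xi$ depends on the optimizer. The distribution shift is handled entirely by the total-variation coverage-gap hypothesis. One must also make sure the calibration samples are i.i.d.\ across $j$ and independent of the test draw; this follows from Assumption~\ref{ass:iid_noise} and the independent sampling of targets. The resulting guarantee is marginal over the calibration data.
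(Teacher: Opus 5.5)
Your proposal matches the paper's proof step for step: the same one-step recursion for $\|\hat{\Theta}_k(x_k-\bar{x}_k)\|$ from \eqref{eq:LME} and Assumption~\ref{ass:feasible_policy}, the initialization $V_0=0$, and Lemma~\ref{lemma:CPlemma} applied with $\alpha=\delta-\bar{\delta}$. You are also right about the indexing, and the paper's own unrolling step, which writes $\lambda^i$, is in error. Writing $a_i$ for the summand in \eqref{eq:score_def}, the recursion gives $\sum_{i=0}^{k-1}\lambda^{k-1-i}a_i$, and the literal score $\sum_{i=0}^{k-1}\lambda^{i}a_i$ does not dominate this: for $k=2$, $\lambda a_0+a_1>a_0+\lambda a_1$ whenever $a_1>a_0$. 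So commit to redefining the score with weights $\lambda^{k-1-i}$ for calibration and test samples alike, and drop the option of bounding by the $\lambda^i$ score. Identical functional form is what Lemma~\ref{lemma:CPlemma} needs, but the score must also dominate the tracking error pathwise.
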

\begin{proof}
    Define \(V_k\coloneq V(x_k, \bar{x}_k) = \|\hat{\Theta}_k (x_k - \bar{x}_k)\|\). Then we have, 
    \begin{align}
        &V_{k+1}  = \left\| \hat{\Theta}_{k+1}(x_{k+1} - \bar{x}_{k+1})\right\| \nonumber \\
        & \leq \left\| \hat{\Theta}_{k+1}( \varphi_{\pi}(x_k) - \bar{x}_{k+1})\right\| + \left\|\hat{\Theta}_{k+1}D(x_k) w_k\right\| \nonumber \\
        & \overset{\eqref{eq:LME}}{\leq} \lambda V_k + \Delta_V(x_k,\bar{x}_k,\bar{u}_k) +  \left\|\hat{\Theta}_{k+1}D(x_k) w_k\right\| \nonumber \\
        & \leq \lambda^{k+1} V_0 + \sum_{i=0}^{k} \left(\Delta_V(x_i,\bar{x}_i,\bar{u}_i) + \left\|\hat{\Theta}_{i+1} D(x_i)w_i\right\| \right)\lambda^i \nonumber \\
        &\overset{\eqref{eq:score_def}}{=} S^{\scriptscriptstyle(K+1)}_{k+1}.   \nonumber
    \end{align}
    Note that \(V_k = \| x_k - \bar{x}_k\|_{\hat{M}_k}\) and \(V_0 = 0\) by construction. Evaluate the quantile \(C_k= q_{1-\delta + \bar{\delta}}(\mathcal{S}_k)\) (see Remark~\ref{remark:CPremark}) for each \(k\in \Ibb_{[1,N]}\) . Applying Lemma~\ref{lemma:CPlemma} and taking the complement, we obtain, 
    \begin{align}
         \Pr \left(V_k > C_k \right) & \leq \delta - \bar{\delta} +  \sum_{i=1}^{K} \bar{w}_i \mathsf{d}_{\text{TV}} (\bar{\mathcal{S}}_{k}, \bar{\mathcal{S}}_{k,i}) \nonumber \leq \delta. \nonumber
    \end{align}
    Taking a complement again gives the desired claim.  
\end{proof}


\begin{remark}\label{remark:OnTheorem1}
   Note that  \(W_k = C^2_k\hat{M}^{-1}_k\) dictates the size and shape of the prediction set of \(x_k\), analogous to that of covariance in Gaussian-based~\cite{PRSpaperHewing} methods. Our approach, as in existing contraction-based methods (\S 4.2,\cite{SinghRobustFeedbackPlanning}) allows us to control the associated conservativeness up to a certain degree by an appropriate design choice of contraction metric bounds \(\underline{m}, \overline{m}\) and the contraction rate \(\lambda\) in equation~\eqref{eq:LME}. 
\end{remark}

\begin{remark}\label{remark:inital_state_distribution}
    Lemma~\ref{lemma:CPlemma} provides a way to handle the loss of exchangeability caused by distribution shift between simulated rollouts and real-world deployment. To generalize the setup in Lemma~\ref{lemma:CPlemma}, \(D_w\) may be constructed using samples from an approximate noise distribution \(\hat{\mathcal{Q}}\) instead of the actual noise distribution \(\mathcal{Q}\) (see Section~\ref{subsec:crazyflieHardware}). In such cases, the application of Lemma~\ref{lemma:CPlemma} and subsequently Theorem~\ref{thrm:confset_thrm} will assume the gap between the calibration distribution \((\Omega, \hat{\mathcal{Q}}^N)\) and the unknown test distribution \((\Xi, \mathcal{Q}^N)\) can be sufficiently controlled by the weights. We could also update $\delta$ of Lemma~\ref{lemma:CPlemma} using adaptive CP~\cite{2020_Tibshirani_CP-CS}. 
\end{remark}

Although we do not consider closed-loop control constraints in our approach, we can obtain such guarantees by imposing some structure on the learned controller \(\hat{\pi}\), as we show in the following result.  

\begin{corollary}\label{prop:control_conf_set}
Let \(\mathcal{B}^{1-\delta}(\bar{x}_k, W_k)\) be a confidence set for \(x_k\) at probability level \((1-\delta)\). Suppose \(u_k = \hat{\pi}(x_k,\bar{x}_k, \bar{u}_k) = k(x_k, \bar{x}_k) + \bar{u}_k\) and Assumption~\ref{ass:feasible_policy} holds, i.e., \(k(x,x) = 0\) for all \(x\in \Rbb^{n_x}\). If \(k(\cdot)\) is Lipschitz with Lipschitz constant \(L\), then \(\mathcal{B}^{1-\delta} (\bar{u}_k, Z_k)\) is a confidence set for \(u_k\) at probability level \((1-\delta)\), where \(Z_k = \tfrac{L^2}{\lambda_{\text{min}} (W_k^{-1})} I\) 
\end{corollary}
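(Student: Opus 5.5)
The plan is an event-inclusion argument: show that whenever $x_k$ lies in the state confidence set, $u_k$ lies in the proposed control set. The probability bound for $x_k$ then transfers directly to $u_k$.

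First I would unpack the definition. By Definition~\ref{def:confidence_set_def}, $x_k\in\mathcal{B}^{1-\delta}(\bar{x}_k,W_k)$ means $\|x_k-\bar{x}_k\|_{W_k^{-1}}\le 1$, and this event has probability at least $1-\delta$. Since $W_k^{-1}\succeq \lambda_{\min}(W_k^{-1})I$, the Rayleigh quotient bound gives, on this event,
\begin{align}
\lambda_{\min}(W_k^{-1})\,\|x_k-\bar{x}_k\|^2 \le (x_k-\bar{x}_k)^\top W_k^{-1}(x_k-\bar{x}_k)\le 1 . \nonumber
\end{align}
Hence $\|x_k-\bar{x}_k\|\le 1/\sqrt{\lambda_{\min}(W_k^{-1})}$. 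Here $W_k^{-1}=\hat{M}_k/C_k^2$ is positive definite, so $\lambda_{\min}(W_k^{-1})>0$, provided $C_k>0$. The degenerate case $C_k=0$ forces $x_k=\bar{x}_k$, and then $u_k=\bar{u}_k$ trivially.

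Next I would use the controller structure. Assumption~\ref{ass:feasible_policy} gives $k(\bar{x}_k,\bar{x}_k)=0$, so
\begin{align}
u_k-\bar{u}_k = k(x_k,\bar{x}_k)-k(\bar{x}_k,\bar{x}_k). \nonumber
\end{align}
Reading the Lipschitz hypothesis as Lipschitz in the first argument, uniformly in the second, yields $\|u_k-\bar{u}_k\|\le L\|x_k-\bar{x}_k\|\le L/\sqrt{\lambda_{\min}(W_k^{-1})}$. A joint-argument Lipschitz constant would also work, since the second argument is unchanged.

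Finally I would check that this is membership in the claimed control set. With $Z_k=\tfrac{L^2}{\lambda_{\min}(W_k^{-1})}I$,
\begin{align}
\|u_k-\bar{u}_k\|_{Z_k^{-1}}^2=\tfrac{\lambda_{\min}(W_k^{-1})}{L^2}\|u_k-\bar{u}_k\|^2\le 1 . \nonumber
\end{align}
This holds whenever $L>0$; if $L=0$, then $u_k=\bar{u}_k$ identically and the claim is trivial. We therefore have the event inclusion $\{x_k\in\mathcal{B}^{1-\delta}(\bar{x}_k,W_k)\}\subseteq\{u_k\in\mathcal{B}^{1-\delta}(\bar{u}_k,Z_k)\}$. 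Monotonicity of probability then gives $\Pr(u_k\in\mathcal{B}^{1-\delta}(\bar{u}_k,Z_k))\ge 1-\delta$.

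I do not expect any step to be a genuine obstacle. The points needing care are the interpretation of the Lipschitz condition and the degenerate cases $C_k=0$ or $L=0$, where $Z_k$ or $W_k$ is singular and the ellipsoid should be read as a point.
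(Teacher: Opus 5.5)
Your proposal is correct and matches the paper's proof: use Assumption~\ref{ass:feasible_policy} to write $u_k-\bar{u}_k=k(x_k,\bar{x}_k)-k(\bar{x}_k,\bar{x}_k)$, apply the Lipschitz bound, then the eigenvalue bound $\|x_k-\bar{x}_k\|\le \|x_k-\bar{x}_k\|_{W_k^{-1}}/\sqrt{\lambda_{\min}(W_k^{-1})}$, and transfer the $1-\delta$ probability through Definition~\ref{def:confidence_set_def}. Your explicit event-inclusion phrasing and your treatment of the degenerate cases $L=0$ and $C_k=0$ are minor additions that the paper leaves implicit.
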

\begin{proof}
We have the following, 
\begin{align*}
    \|u_k - \bar{u}_k\| &= \|k(x_k, \bar{x}_k)\| = \|k(x_k, \bar{x}_k) - k(\bar{x}_k, \bar{x}_k)\|  \\
    & \leq L \|x_k - \bar{x}_k\| \leq L\frac{\|x_k - \bar{x}_k\|_{W_k^{-1}}}{\sqrt{\lambda_{\text{min}} (W_k^{-1})}}.
\end{align*}
From Definition~\ref{def:confidence_set_def}, we get \(\|x_k - \bar{x}_k\|_{W_k^{-1}}\leq 1\) with probability at least \((1-\delta)\), which implies, \(\Pr (\|u_k - \bar{u}_k\| \leq \frac{L}{\sqrt{\lambda_{\text{min}} (W_k^{-1})}}) \geq 1-\delta \). Then the result follows also from Definition~\ref{def:confidence_set_def}. 
\end{proof}

\section{Deterministic Reformulation}\label{sec:ConstraintTightening}

We will use confidence sets constructed using Theorem~\ref{thrm:confset_thrm} to reformulate chance-constraints in~\eqref{eq:SOCP_stateconstraints},~\eqref{eq:SOCP_goalconstraint} into deterministic constraints on the target trajectory \((\bar{\mathbf{x}},\bar{\mathbf{u}})\) via suitable constraint tightening. Essentially, we aim to find a tightened state constraint set \(\mathcal{X}^c_k\) such that \(\mathcal{X}^c_k \oplus \mathcal{B}^{1-\delta}(\bar{x}_k, W_k)\subseteq \mathcal{X} \) for all \(k\in \Ibb_{[0,N]}\), whereby restricting \(\bar{x}_k \in \mathcal{X}^c_k\) ensures \(x_k \in \mathcal{X}\) with probability at least \((1-\delta)\) . We will use the approach from~\cite{LewCCSCP} to construct suitable convex over approximations of the set \(\mathcal{X}^c_k\) for particular classes of sets \(\mathcal{X}\). 

Let \(\mathcal{X}\) be defined by polytopic and non-convex obstacle avoidance constraints on the state, \(\mathcal{X} = \mathcal{X}_{\text{poly}} \cap \mathcal{X}_{\text{free}}\), where \(\mathcal{X}_{\text{poly}} = \{x\in \Rbb^{n_x} \;\; \vert \;\; Ax\leq b\}\), \(A\in \Rbb^{l\times {n_x}}\), \(b\in\Rbb^l\), and \(\mathcal{X}_{\text{free}} = \Rbb^{n_x} \backslash \cup_{i=1}^J \mathcal{O}_i\). Here \(\mathcal{O}_i \subset \Rbb^{n_x}, i\in \Ibb_{[1,J]}\) are closed convex sets representing \(J\) obstacles. First, we split the joint chance constraints \( \Pr(x_k\in \mathcal{X}_{\text{poly}} \cap \mathcal{X}_{\text{free}})\geq 1-p\) into single chance-constraints using the Bonferroni inequality~\cite{NakkaCCnonlinearS} with equal risk allocation accordingly, 
\begin{align}
    \Pr(x_k\in \mathcal{X}_{\text{poly}})\geq 1-\bar{p} \label{eq:eqriskPolyCC}\\
    \Pr(x_k\in \mathcal{X}_{\text{free}})\geq 1-\bar{p} \label{eq:eqriskObsCC}
\end{align}
\vspace{1ex} 
where \(\bar{p} = p/2\). The above result is obtained by applying Boole's inequality over the complement of the event \(x_k \in \mathcal{X}_{\text{poly}} \cap \mathcal{X}_{\text{free}}\). Let \(\mathcal{B}^{1-\bar{p}}(\bar{x}_k, W_k)\) be a confidence set for \(x_k\) with probability level \((1-\bar{p})\). Then the polytopic chance constraint may be conservatively reformulated as \(l\) deterministic constraints~\cite{LewCCSCP}, 
\begin{align}
    \eqref{eq:eqriskPolyCC} \;\; \Leftarrow \;\; A_i\bar{x}_k + \sqrt{A_i W_k A_i^{\top}}  \leq b_i \;\; i\in \Ibb_{[1,l]}.\label{eq:polyConstraintTight}
\end{align}
Here the inequality condition in~\eqref{eq:polyConstraintTight} is the robust counter part to the condition \(A_ix_k\leq b_i\) over the confidence set \(\mathcal{B}^{1-\bar{p}}(\hat{x}_k, W_k)\)~\cite{Koller_etal}. For the obstacle avoidance constraints, we will use the signed distance function \(d_i:\mathcal{X}\rightarrow \Rbb\) that returns the shortest distance from point \(x_k\) to the boundary \(\partial \mathcal{O}_i\) of the set \(\mathcal{O}_i\)~\cite{LewCCSCP}. It is defined as, 
\begin{align}
    d_i(x_k) \coloneqq \inf_{y\in \mathcal{O}_i} \|x_k-y\| - \inf_{z\notin \mathcal{O}_i} \|x_k-z\|
\end{align}
where \(d_i(x_k)\geq0 \implies x_k \notin \mathcal{O}_i\). Define, 
\begin{align}
    n_i(x) = \frac{x-\breve{x}_i}{d_i(x)} , \;\; \breve{x}_i = \argmin_{y\in \partial \mathcal{O}_i} \|x-y\|.
\end{align}
Then, we can conservatively reformulate obstacle avoidance chance constraint~\eqref{eq:eqriskObsCC} into \(J\) deterministic constraints as, 
\begin{align}
    \eqref{eq:eqriskObsCC} \;\; \Leftarrow \;\; d_i(\bar{x}_k) - \sqrt{n_i^\top W_kn_i} \geq 0, \;\; i\in \Ibb_{[1,J]}.\label{eq:ObsConstraintTight}
\end{align}
For more details, refer~\cite{LewCCSCP}. Let \(\mathcal{X}_N = \{x \mid Hx\leq h, H\in \Rbb^{r\times n_x}, h\in \Rbb^{r}\}\). The deterministic reformulation of~\eqref{eq:SOCP_goalconstraint} then follows the same approach as in~\eqref{eq:polyConstraintTight} with a confidence set of probability level \((1-p)\), which here gives \(r\) deterministic constraints.

Now consider the following relaxed deterministic reformulation of Problem~\ref{prob:problem1} for some \(\delta\in (0,1)\).
\begin{problem}
\label{prob:problem2}
\begin{subequations}
\begin{align}
\min_{\bar{\mathbf{x}},\bar{\mathbf{u}}}\;&  \;\;  c_F(\bar{x}_N) + \sum_{k=0}^{N-1} c(\bar{x}_k, \bar{u}_k)  \\
\text{s.t.}\;& \bar{x}_{k+1} = f(\bar{x}_k,\bar{u}_k), \;\;\;\; k\in \Ibb_{[0,N-1]}  \nonumber\\
& A_i\bar{x}_k + \bar{\eta}_k(\delta)\|A_i\|_{M^{-1}_k}  \leq b_i, \;\; i\in \Ibb_{[1,l]},  k\in \Ibb_{[1,N-1]} \nonumber\\
& d_i(\bar{x}_k) - \bar{\eta}_k(\delta)\|n_i\|_{M^{-1}_k} \geq 0, \;\;i\in \Ibb_{[1,J]}, k\in \Ibb_{[1,N-1]}  \nonumber \\
& H_i\bar{x}_N + \eta (\delta) \|H_i\|_{M^{-1}_N}  \leq h_i, \;\; i\in \Ibb_{[1,r]} \nonumber\\
& \bar{x}_0 = \mathbf{x}(0) \nonumber
\end{align}
\end{subequations}
\end{problem}
where, \(\bar{\eta}_k (\delta)= q_{1-\tfrac{\delta}{2}}(\mathcal{S}_k)\) and \(\eta(\delta)= q_{1-\delta}(\mathcal{S}_N)\). 
We then have the following result, 


\begin{theorem}\label{thrm:main_theorem}
    Let \((\mathbf{x}^\ast, \mathbf{u}^\ast)\) be the solution to Problem~\ref{prob:problem2} for \(\delta \in (0,1)\), if it exists. Suppose that Assumptions~\ref{ass:iid_noise}~and~\ref{ass:feasible_policy} hold and let the coverage gap be upper bounded  by \(\bar{\delta}\) for all \(k\in \Ibb_{[0,N]}\). Then the control policy \(u_k = \hat{\pi}(x_k, x^\ast_{k}, u^\ast_{k})\) is a feasible solution to Problem~\ref{prob:problem1} if \(\delta = p- 2\bar{\delta}\). 
\end{theorem}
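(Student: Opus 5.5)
The plan is to verify the two chance constraints of Problem~\ref{prob:problem1} separately for the closed-loop system under $u_k = \hat{\pi}(x_k, x^\ast_k, u^\ast_k)$. The dynamics constraint and the initial condition hold by construction: the target $(\mathbf{x}^\ast,\mathbf{u}^\ast)$ satisfies the nominal dynamics with $x^\ast_0=\mathbf{x}(0)$, and Assumption~\ref{ass:feasible_policy} makes it a valid nominal closed-loop trajectory. The argument then proceeds in three steps: apply Theorem~\ref{thrm:confset_thrm} at suitably chosen probability levels to obtain ellipsoidal confidence sets; show that the tightened constraints of Problem~\ref{prob:problem2} are exactly the robust counterparts~\eqref{eq:polyConstraintTight} and~\eqref{eq:ObsConstraintTight} over those ellipsoids; and combine the risks using the Bonferroni split~\eqref{eq:eqriskPolyCC}--\eqref{eq:eqriskObsCC}.

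\textbf{Step 1 (confidence sets).} Treat the solution $(\mathbf{x}^\ast,\mathbf{u}^\ast)$ as a draw from $\Xi$, as in the setup of Theorem~\ref{thrm:confset_thrm}.
\begin{itemize}
\item For the stage constraints $k\in\Ibb_{[1,N-1]}$, apply Theorem~\ref{thrm:confset_thrm} with level parameter $\delta' := \tfrac{\delta}{2}+\bar{\delta}$. This gives the quantile $C_k = q_{1-\delta'+\bar\delta}(\mathcal{S}_k) = q_{1-\delta/2}(\mathcal{S}_k) = \bar\eta_k(\delta)$. Hence $\mathcal{B}(x^\ast_k, \bar\eta_k^2 \hat M_k^{-1})$ is a confidence set of level $1-\delta/2-\bar\delta$.
\item For the terminal time, apply Theorem~\ref{thrm:confset_thrm} with $\delta'' := \delta+\bar\delta$. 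Then $C_N = q_{1-\delta}(\mathcal{S}_N) = \eta(\delta)$, and $\mathcal{B}(x^\ast_N,\eta^2\hat M_N^{-1})$ has level $1-\delta-\bar\delta$.
\end{itemize}
Both applications need $\bar\delta<\delta'$ and $\delta'<1$. The first is immediate. The second holds because $\delta = p-2\bar\delta$ implies $\delta+\bar\delta \le p < 1$.

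\textbf{Step 2 (tightening matches robust counterparts).} With $W_k=C_k^2\hat M_k^{-1}$ we have $\sqrt{A_iW_kA_i^\top}=C_k\|A_i\|_{\hat M_k^{-1}}$, which is precisely the tightening term in Problem~\ref{prob:problem2}; the same identity holds for $n_i$ and $H_i$. For the polytope, $x_k$ lying in the ellipsoid implies $A_ix_k \le A_ix^\ast_k + \sqrt{A_iW_kA_i^\top}\le b_i$ for all $i$, by the support function of an ellipsoid. For the obstacles, I rely on the conservative implication~\eqref{eq:ObsConstraintTight} from~\cite{LewCCSCP}. The intended argument is that, since $\mathcal{O}_i$ is convex, the separating half-space with normal $n_i$ at $\breve{x}_i$ contains $\mathcal{O}_i$ on one side, which reduces the obstacle constraint to a linear one. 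The ellipsoid-containment argument is then reused, giving $\Pr(x_k\in\mathcal{X}_{\text{free}})\ge 1-\delta/2-\bar\delta$.

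\textbf{Step 3 (risk allocation).} By Boole's inequality,
\[
\Pr\left(x_k\notin\mathcal{X}\right)\le \Pr\left(x_k\notin\mathcal{X}_{\text{poly}}\right)+\Pr\left(x_k\notin\mathcal{X}_{\text{free}}\right)\le 2\left(\tfrac{\delta}{2}+\bar\delta\right)=\delta+2\bar\delta = p .
\]
For the terminal goal set, $\Pr(x_N\notin\mathcal{X}_N)\le \delta+\bar\delta\le p$, which is slack by $\bar\delta$.

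\textbf{Main obstacle.} The delicate point is the bookkeeping of the levels. One must check that the quantile levels $q_{1-\delta/2}$ and $q_{1-\delta}$ prescribed in Problem~\ref{prob:problem2} correspond, via the shift $1-\delta'+\bar\delta$ in Theorem~\ref{thrm:confset_thrm}, to effective risks $\delta/2+\bar\delta$ and $\delta+\bar\delta$. One must also ensure the coverage-gap bound $\bar\delta$ is uniform in $k$ and holds at the test distribution induced by the optimizer's solution, so that it can be invoked at each time separately. The remaining subtle step is the obstacle implication. If the half-space argument proves insufficient, I will cite~\cite{LewCCSCP} directly for~\eqref{eq:ObsConstraintTight}.
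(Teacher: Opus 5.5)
Your proposal is correct and follows essentially the same route as the paper's proof. Apply Theorem~\ref{thrm:confset_thrm} at effective risk levels $\tfrac{\delta}{2}+\bar\delta$ and $\delta+\bar\delta$ so that the quantiles become $\bar\eta_k(\delta)$ and $\eta(\delta)$, identify the tightened constraints of Problem~\ref{prob:problem2} with the robust counterparts \eqref{eq:polyConstraintTight} and \eqref{eq:ObsConstraintTight}, and combine via Boole's inequality to get $1-\delta-2\bar\delta=1-p$ for the stage constraints and $1-\delta-\bar\delta\ge 1-p$ at the terminal time; your check that $\bar\delta<\tfrac{\delta}{2}+\bar\delta$ and $\delta+\bar\delta\le p<1$ (so Theorem~\ref{thrm:confset_thrm} applies) makes explicit a step the paper leaves implicit.
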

\begin{proof}
    For each \(k\in \Ibb_{[1,N-1]}\), set \(W_k = \bar{\eta}_k(\delta)^2 M^{-1}_k\) and \(W_N = \eta(\delta)^2M^{-1}_N\). From Theorem~\ref{thrm:confset_thrm}, \(\mathcal{B}^{1-\tfrac{\delta}{2}-\bar{\delta}}(x^\ast_k, W_k)\) and \(\mathcal{B}^{1-\delta-\bar{\delta}}(x^\ast_N, W_N)\) are confidence sets for \(x_k\), \(k\in \Ibb_{[0,N-1]}\), and \(x_N\) in system~\eqref{eq:sys_dyn}, respectively, under the policy \(u_k = \hat{\pi}(x_k, x^\ast_k, u^\ast_k)\). Since \(\mathbf{x}^\ast, \mathbf{u}^\ast\) satisfies the tightened constraints in Problem~\ref{prob:problem2}, from~\eqref{eq:polyConstraintTight} and~\eqref{eq:ObsConstraintTight},  we get \(\Pr(x_k\in \mathcal{X})\geq 1- \delta - 2\bar{\delta} = 1-p\) thus satisfying constraint~\eqref{eq:SOCP_stateconstraints}. Similarly, we obtain \(\Pr(x_N\in \mathcal{X}_N)\geq 1 - \delta - \bar{\delta} = 1-p + \bar{\delta} \geq 1-p\), which satisfies~\eqref{eq:SOCP_goalconstraint}.     
\end{proof}

We note that Theorem~\ref{thrm:main_theorem} quantifies the effect of stochastic noise, the quality of predicted contraction metrics and controller, and the conservativeness demanded by the coverage gap between the calibration and test distributions, in one unified way. Under the assumption of the coverage gap bound \(\bar{\delta}\), it provides a suitable method to ensure chance-constraint satisfaction up to the desired probability level when using the learned controller \(\hat{\pi}\). In Section~\ref{sec:experiments}, we validate the same through both numerical and hardware experiments. We note that, practically, it is not possible to know the coverage gap bound \(\bar{\delta}\) a priori, making our guarantees of the \textit{a posteriori} type \cite{CPforChanceConstrainedProgramming}. To address this, one may update the upper bound by means of verification or re-estimate the quantile online using adaptive CP~\cite{2024_Zhou_ACP-CBF-MPC}.

\section{Case Studies}
\label{sec:experiments}

We present our approach with the Dubins car model in numerical simulation and with the Crazyflie quadcopter in the real world. The code for all experiments is made available at \myhref{https://github.com/Rihan24/SCC-TrajOpt}{https://github.com/Rihan24/SCC-TrajOpt}.
In both experiments, we show that the confidence sets constructed using our approach from finite noise samples are valid up to the prescribed probability level, and the proposed chance constraints are satisfied. We remark that the main goal of this section is the empirical validation of the formal guarantees established by our theorems, rather than demonstrating the superiority of our approach over all existing data-driven/learning-based motion planning and control methods. For example, even in the case of a poorly learned control policy with a suboptimally designed contraction metric, our theorems still provide a means to formally quantify their performance without relying on structural or distributional priors. For this reason, we compare against only a limited set of existing approaches in the following. These serve solely to illustrate our claims within the constraints of the page limit.

We leverage a neural network parameterized CCM \(M(x;\theta_m)\) and its controller \(\pi(x, x^\ast, u^\ast; \theta_u)\) from the implementation in~\cite{SunJhaFanLearningCertifiedCCM}, which satisfies Assumption~\ref{ass:feasible_policy} by design. We consider \(\hat{M}_k = M=  \hat{M}(\mathbf{x}(0);\theta_m)\) as the predicted matrices for approximately constructing contraction metrics for all \(k\in \Ibb_{[1,N]}\). The discrete-time contraction rate \(\lambda\in [0,1)\) is obtained from the chosen continuous time counterpart \(\gamma>0\) in~\cite{SunJhaFanLearningCertifiedCCM} using \(\lambda =\sqrt{(1- 2\gamma\Delta t)(\underline{m}/ \overline{m})}\)~\cite{tsukamoto2021contraction}, where \(\Delta t\) denotes the sampling rate for the discrete-time system. We note that the quantiles \(\bar{\eta}_k (\delta)\) quickly converge to \(\eta(\delta)\) in a few time steps in both experiments for our choices of \(\lambda\). Hence, we conservatively set \(\bar{\eta}_k (\delta) = \eta(\delta)\) for all \(k\in \Ibb_{[1, N-1]}\), which, along with the choice of constant contraction metric, results in confidence sets of the same size over the time horizon. We consider step cost and final cost as \(c(x_k,u_k) = u_k^\top Ru_k\) and \(c_F =0\), respectively, where \(R\) is some positive definite weight matrix. The optimization in Problem~\ref{prob:problem2} is posed and solved using the CasADi interface~\cite{casadiPaper} with the NLP solver IPOPT~\cite{IpoptPaper}. 

\textbf{Choice of \(\boldsymbol{\Omega}\):} For both experiments, we follow the approach from~\cite{SunJhaFanLearningCertifiedCCM} to sample suitable target trajectories. We first parametrize the target control trajectory \(\mathbf{\bar{u}}\) as a linear combination of fixed sinusoidal basis functions with a predefined set of frequencies. We then sample i.i.d. reference control trajectories by randomly sampling the weight for each frequency component from a normal distribution. The initial state \(\bar{x}_0\) is set to \(\mathbf{x}(0)\). We then generate the corresponding target state trajectory \(\mathbf{\bar{x}}\) by propagating  \(\bar{x}_0\) through the nominal dynamics \(\bar{x}_{k+1} = f(\bar{x}_k, \bar{u}_k)\). 

In both simulation and hardware experiments, the distribution shift between the calibrated and test distributions was minimal and well-accommodated by the slight conservativeness associated with our method. Therefore, using vanilla CP with \(\delta = p\) gave us satisfactory results. For online motion planning, one may consider receding horizon alternatives of our approach and experiments.

\subsection{Numerical Simulations}\label{subsec:dubinscar}
We consider the Dubin's car model from~\cite{SunJhaFanLearningCertifiedCCM} with 4 state variables \(x:=[p_x, p_y, \theta,v]\) where \(p_x, p_y\) are the positions, \(\theta\) is the heading angle and \(v\) is the velocity. The control inputs are \(u:=[\omega, a]\), where \(\omega\) and \(a\) are angular velocity and linear acceleration respectively. The dynamics is given by \(\dot{x} = f(x,u) = [v \cos(\theta), v \sin(\theta), \omega,a]^{\top}\).
We discretize this via a zero-order hold on the controls and add noise \(\mathbf{w}\) to obtain the discrete-time dynamics \(x_{k+1} = x_k + \Delta t f(x_k, u_k) + Dw_k\). We will consider the following two cases,
\begin{enumerate}[label=(\roman*)]
    \item Uniform noise \(w_k \sim U[w_{\text{min}}, w_{\text{max}}]\) over the hyperbox defined by \(w_{\text{min}} = [-\sigma/5, -\sigma/5, -\sigma,-\sigma]^{\top}\) and \(w_{\text{max}} = [\sigma/5,\sigma/5, \sigma,\sigma]^{\top}\). We set \(\sigma=0.15\) and \(D = \text{diag}([1,1,1,1])\). 
    \item Zero mean 3-component Gaussian mixture noise with density \(\sum_{i=1}^3 \psi_i \mathcal{N}(\mu_i, \Sigma_i) \) with \(\mu_1 = -0.5\cdot\mathds{1}_4, \mu_2 = 0\cdot\mathds{1}_4, \mu_3 = 0.5\cdot\mathds{1}_4, \Sigma_i = (0.05)^2\cdot I\) for \(i=1,2,3\), \(\psi_1, \psi_3=0.1, \psi_2=0.8\) and \(D = \text{diag}([0,0,1,1])\), where $\mathds{1}_4 = [1,1,1,1]^{\top}$. Here \(\mathcal{N}\) is the multivariate Gaussian density.
\end{enumerate}
The noise parameters are chosen so as to cater to the problem dimensions. We consider the motion planning problem from~\cite{NakkaCCnonlinearS}, with \(\mathbf{x}(0) = [0,0.4,0,0]^\top\),  goal region loosely centered at \(\mathbf{x}(N) = [10,0.4,0,0]^\top\) and collision avoidance with a wall and obstacle given by \(\mathcal{X}_{\text{poly}} = \{p_y \mid p_y\leq 2\}\) and obstacle \(\mathcal{O} = \left\{ p_{x,y} \mid \left\| p_{x,y}- p_{\text{obs}} \right\| \leq 1.2 \right\}\) respectively, where \(p_{x,y} = [p_x, p_y]^\top,p_{\text{obs}} = [0,5]^\top \). We set \(R = 0.1\cdot I\), failure probability \(\delta=p=0.1\), sampling rate \(\Delta t =0.05\) and time horizon to \(N=200\) steps. We set \(\underline{m}=0.5, \overline{m}=10\) and \(\gamma =1\), which corresponds to a discrete-time contraction rate of \(\lambda = 0.21\).  The disturbance dataset \(\D_w\) is constructed with \(K =20\) by sampling from the respective noise distributions. The quantile \(\eta(\delta)\) comes to 0.301 and 0.574 for cases (i) and (ii), respectively. We solve Problem~\ref{prob:problem2} to get \((x^\ast, u^\ast)\), which takes on average 1.67 seconds, and run 200 closed-loop realizations under policy \(u_k =\hat{\pi}(x_k, x^\ast_k, u^\ast_k)\), as shown in Figure~\ref{fig:CCM_dubins}, to verify satisfaction of chance constraints. We note that all chance constraints are satisfied at the failure probability level \(p=0.1\). In particular, the empirical failure probabilities obtained are \(\max_k \{\Pr(x_k \notin \mathcal{X})\} = 0 \%\) for case (i) and \(\max_k \{\Pr(x_k \notin \mathcal{X})\} = 1.5 \%\) for case (ii). 
\begin{figure}[htbp]
    \vspace{-1em}
    \centering
    \includegraphics[width=0.9\linewidth]{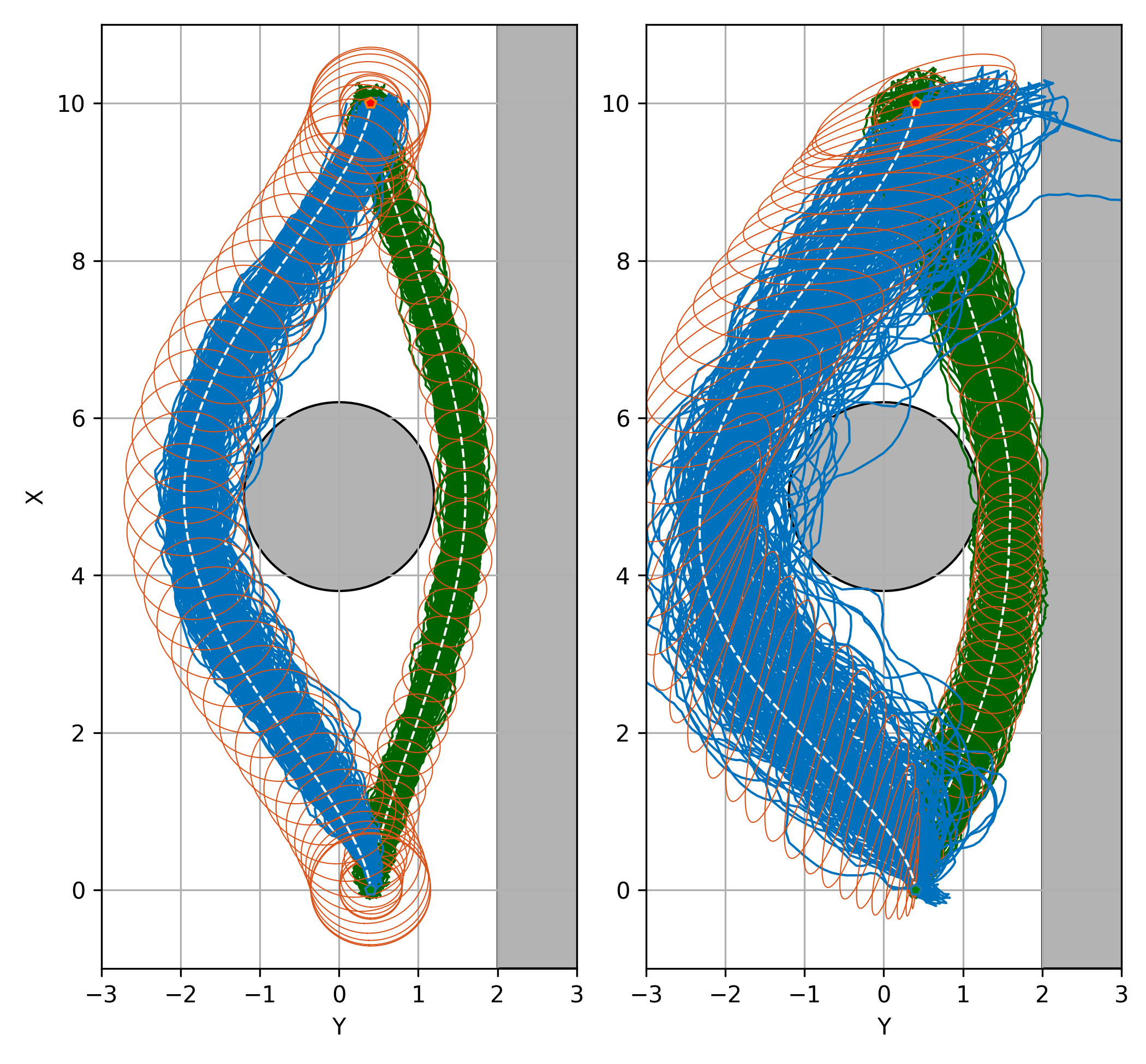}
    \vspace{-1em}
    \caption{Closed-loop realizations for the Dubin's car system with our approach (shown on the left) and linearization approach with Gaussian approximation of noise and LQR feedback (shown on the right). Case (i) Uniform noise and Case (ii) Gaussian mixture noise are shown in green and blue, respectively. The corresponding target trajectory \(\mathbf{x}^\ast\) is shown in dashed white lines, and the respective confidence sets are shown in orange.}
    \vspace{-1em}
    \label{fig:CCM_dubins}
\end{figure}

We compare our approach to the linearization baseline from ~\cite{LewCCSCP, CutiousMPCGP}, shown in Figure~\ref{fig:CCM_dubins}, with Gaussian approximation of the noise and a linear quadratic regulator (LQR) feedback policy.  The Gaussian distribution is characterized by MLE estimates of the mean and covariance from samples of the respective non-Gaussian noise distributions, and the LQR feedback gain is computed using linearized closed-loop dynamics at each state. The uncertainty of the corresponding closed-loop state at time \(k\) is characterized as a Gaussian \(x_k \sim \mathcal{N}(\mu_k, \Sigma_k)\), where \(\mu_k, \Sigma_k\) are obtained by mean and covariance propagation through the linearized dynamics. We use constraint tightening based on ellipsoidal confidence sets \(\mathcal{B}^{1-\tfrac{p}{2}}\left(\mu_k, \mathcal{X}^2_{n_x}(1-\tfrac{p}{2})\Sigma_k\right)\) for \(x_k\)~\cite{LewCCSCP}, where \(\mathcal{X}^2_{n_x}(1-\alpha)\) denotes the \((1-\alpha)\)-th quantile of the \(\mathcal{X}^2\) distribution with \(n_x\) degrees of freedom. Solving this tightened problem takes on average 103.8 seconds. From Figure~\ref{fig:CCM_dubins}, we see that Gaussian-based methods perform poorly in accurately capturing the uncertainty associated with the state over the time horizon. The empirical failure probabilities obtained are \(\max_k \{\Pr(x_k \notin \mathcal{X})\} = 10 \%\) for case (i) and \(\max_k \{\Pr(x_k \notin \mathcal{X})\} = 20.5 \%\) for case (ii). This can be attributed to heavier tail characteristics in the noise and nonlinearity in the dynamics. 



\subsection{Hardware Experiments}\label{subsec:crazyflieHardware}
We implement our approach on the \myhref{https://www.bitcraze.io/products/crazyflie-2-1-plus/}{Crazyflie 2.1} drone for safe motion planning through a static cluttered environment. We consider the quadcopter model and controller implementation from~\cite{SinghRobustFeedbackPlanning,SunJhaFanLearningCertifiedCCM}. Consider state \(x:= [p_x, p_y, p_z, \dot{p}_x,\dot{p}_y,\dot{p}_z, f, \phi, \theta, \psi]^{\top}\), where the positions \(p_x, p_y, p_z\) are expressed in the inertial world frame of the drone, \(\phi, \theta, \psi\) represent Euler angles as per the XYZ rotation sequence, both defined according to the North-East-Down frame convention. Here, \(f\) is the net mass-normalized force by the motors. Control inputs are chosen as \(u = [\dot{f}, \dot{\phi}, \dot{\theta}, \dot{\psi}]^{\top}\). The system dynamics are given by \(\dot{x} = [ \dot{p}_x,\dot{p}_y,\dot{p}_z, -f\sin(\theta), f\cos(\theta)\sin(\phi), g-f\cos(\theta)\cos(\phi), \dot{f},\dot{\phi}, \dot{\theta}, \dot{\psi}]^{\top}\), where \(g=9.81\). The contraction metric defined through \(\hat{M}\) and the corresponding tracking controller \(\hat{\pi}\) are obtained for \(\underline{m} = 0.5, \overline{m}= 25\) and \(\gamma= 0.8\). We deploy the controller offboard on the Crazyflie at 100Hz, with \myhref{https://www.qualisys.com/}{Qualisys} motion capture feeding pose estimates onto Crazyflie's internal EKF. We remark that the controller \(\hat{\pi}\) requires only a single forward pass to compute control input, and hence can even be deployed onboard the Crazyflie firmware. At each sampling instant, throttle and angle setpoints computed from \(\hat{\pi}\) are passed onto the low-level built-in PID controller onboard the Crazyflie firmware. First, thrust and angle setpoints \([f_c, \phi_c, \theta_c]\) are obtained by integrating over the control input. The respective throttle command \(\tau_c\) is computed using \(f_c\) via an iterative proportional feedback update on the previous throttle value~\cite{SinghRobustFeedbackPlanning}, which allows us to bypass the complex thrust to throttle mapping. The gain of this thrust controller is set to \(K_p=400\). The low-level controller, which is assumed to run at a higher rate, takes in throttle and angle setpoints and computes the downstream motor PWM commands. 

We consider the problem of collision-free motion planning for the Crazyflie drone over a time period of \(7.5\) seconds in a test arena cluttered with obstacles. We set the failure probability level to \(\delta = p=0.1\) and \(R = 0.01\cdot I\). The obstacles are posed as cylindrical bodies for formulating the optimization problem, as shown in Figure~\ref{fig:CCM_quad}. The yaw remains uncontrolled in our experiments. To construct the disturbance dataset \(\D_w\),  sampling from the unknown test distribution, that is, by collecting the flight data over the time horizon for  \(K \geq \lceil (1-p)(K+1)\rceil \) runs, is not practical and generalizable. Instead, we consider a Gaussian MLE estimate of the non-gaussian process noise, calculated using the formula \(w_k = x_{k+1}-f_{\text{DT}}(x_k, u_k)\) from arbitrary trial runs of the Crazyflie drone, and sample from it to generate the calibration dataset \(\D_w\) with \(K=20\).  Here \(f_{\text{DT}}\) denotes the discretized dynamics similar to that in Section~\ref{subsec:dubinscar} with \(D = I\).  We confirm that the MLE estimate is approximately zero mean. Setting \(\Delta t =0.01\) seconds, \(\lambda = 0.14\) and using approach from Sections~\ref{sec:UQ_sec}~and~\ref{sec:ConstraintTightening}, we get \(\eta(\delta)= 0.08\). We solve Problem~\ref{prob:problem2} by warm starting solver with a sample state trajectory in the obstacle free region, and deploy the control policy \(u_k=\hat{\pi}(x_k, x^\ast_k, u^\ast_k)\) over the time horizon. We run the tracking policy for 15 iterations, as shown in Figure~\ref{fig:CCM_quad}. We observe that the drone stays within the defined confidence sets throughout the time horizon in all iterations, which implies that it achieves the prescribed chance constraint. 


\begin{figure}[t]
    \centering
    \begin{subfigure}[c]{0.5\linewidth}
        \centering
        \includegraphics[width=\linewidth]{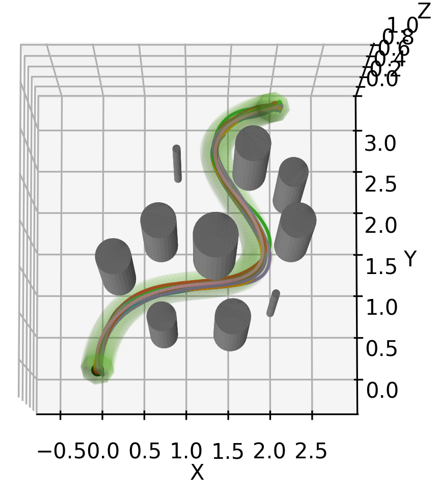}
    \end{subfigure}%
    \begin{subfigure}[c]{0.5\linewidth}
        \centering
        \includegraphics[width=\linewidth]{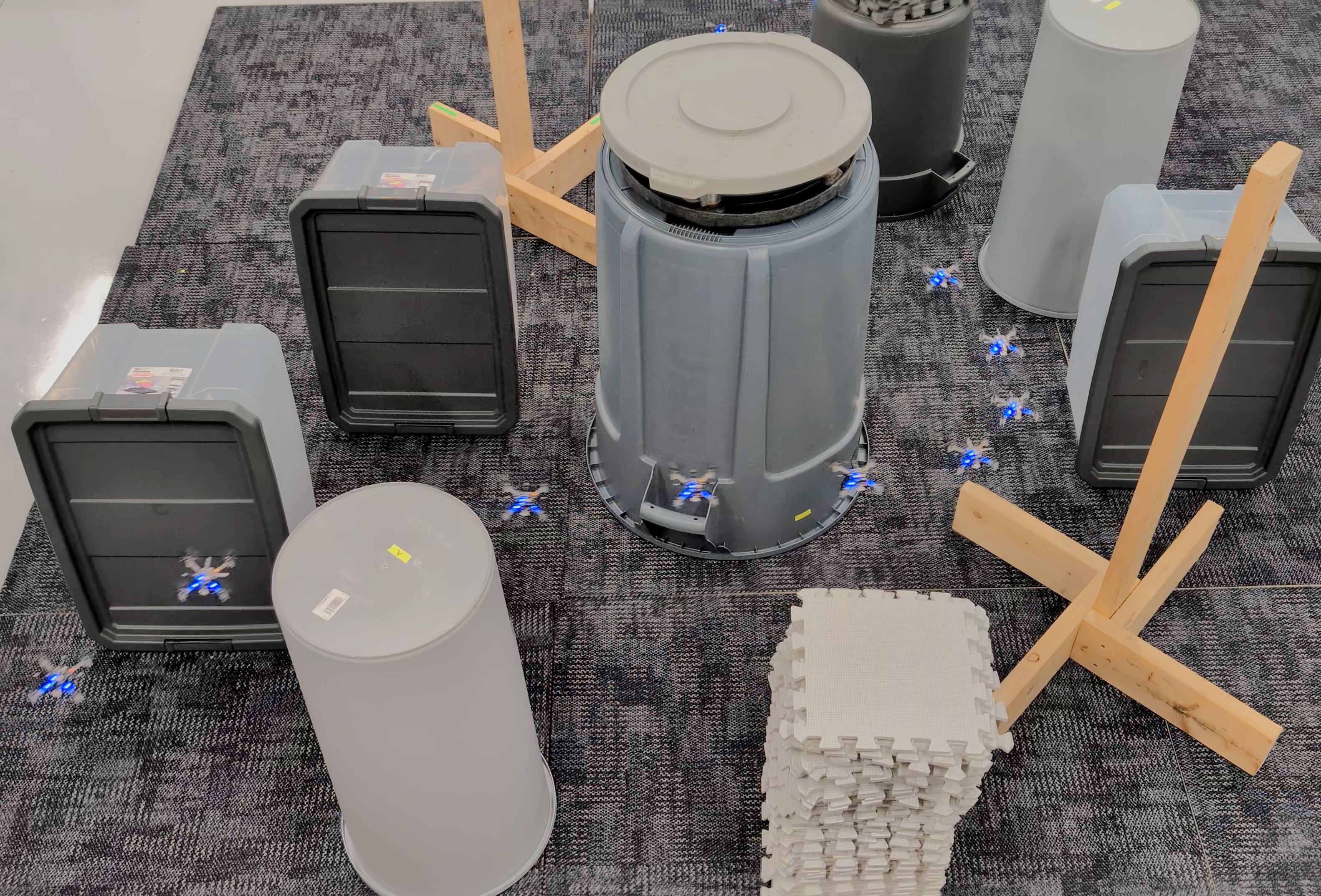}
    \end{subfigure}
    \caption{Left: Closed-loop realizations of the Crazyflie drone using our approach. The 3D confidence set \(\mathcal{B}^{1-\delta}(\hat{x}_k, \eta(\delta)^2 M^{-1})\) for the positions is shown in green. Right: One iteration of Crazyflie trajectory in the cluttered environment.}
    \label{fig:CCM_quad}
    \vspace{-2em}
\end{figure}

\section{Conclusion}
\label{sec_conclusion}
In this work, we proposed a novel method for chance-constrained trajectory optimization and control for nonlinear systems subject to non-Gaussian stochastic uncertainty. It quantifies distribution mismatch and system contraction via finite samples of the uncertainty with minimal distributional and structural assumptions. Using conformal prediction and contraction theory, we established formal guarantees for chance-constraint tightening, enabling deterministic planning with rigorous performance assurances. For example, our approach can ensure safety under non-Gaussian uncertainty even when learned representations of contraction metrics are employed, thereby bridging the gap in providing strict guarantees for learning-based motion planning.
We validated our approach with both simulation and hardware experiments, showing that it achieves the posed chance constraints up to the desired probability level, without being too conservative.

\bibliographystyle{IEEEtran}
\bibliography{bib,bibhiro}
\end{document}